\title{
  Rewriting Logic Semantics of a Plan Execution Language
\thanks{Authors in alphabetical order.}}
\author{
  Gilles Dowek
  \institute{
    \'Ecole polytechnique and INRIA\\
    LIX, \'Ecole polytechnique \\
    91128 Palaiseau Cedex, France
  }
  \email{gilles.dowek@polytechnique.fr}
  \and
  C\'{e}sar Mu\~{n}oz
  \institute{
    NASA\\
    Langley Research Center\\
    MS. 130, Hampton, VA 23681, USA
 } 
 \email{cesar.a.munoz@nasa.gov}
  \and
  Camilo Rocha
  \institute{
    Department of Computer Science \\
    University of Illinois at Urbana-Champaign \\
    201 Goodwin Ave, Urbana, IL 61801, USA
  }
  \email{hrochan2@illinois.edu}
}
\newcommand{\eqaci}{=_{ACUI}}
\newcommand{\lra}{\rightarrow}
\newcommand{\lrn}[1]{{\lra}^{#1}}
\newcommand{\async}{\stackrel{\Box}{\lra}}
\newcommand{\para}{\stackrel{||}{\lra}}
\newcommand{\sync}[2]{\stackrel{#2}{#1}}
\newcommand{\syncr}[1]{\sync{\lra}{#1}}
\newcommand{\rew}{\longrightarrow}
\newbox\tempa
\newbox\tempb
\newdimen\tempc
\def\mud#1{\hfil $\displaystyle{\mathstrut #1}$\hfil}
\def\rig#1{\hfil $\displaystyle{#1}$}
\def\irulehelp#1#2#3{\setbox\tempa=\hbox{$\displaystyle{\mathstrut #2}$}%
                        \setbox\tempb=\vbox{\halign{##\cr
        \mud{#1}\cr
        \noalign{\vskip\the\lineskip}
        \noalign{\hrule height 0pt}%
        \rig{\vbox to 0pt{\vss\hbox to 0pt{${\; #3}$\hss}\vss}}\cr
        \noalign{\hrule}%
        \noalign{\vskip\the\lineskip}%
        \mud{\copy\tempa}\cr}}%
                      \tempc=\wd\tempb
                      \advance\tempc by \wd\tempa
                      \divide\tempc by 2 }
\def\irule#1#2#3{{\irulehelp{#1}{#2}{#3}%
                     \hbox to \wd\tempa{\hss \box\tempb \hss}}}
\newtheorem{theorem}{Theorem}[section]
\newtheorem{proposition}[theorem]{Proposition}
\newenvironment{definition}[1][Definition]{\begin{trivlist}
\item[\hskip \labelsep {\bfseries #1}]}{\end{trivlist}}
\def\thname#1{\mathcal{#1}}
\def\plexil{\mathrm{PXL}}
\begin{document}

  \maketitle

\begin{abstract}
  The {\em Plan Execution Interchange Language} (PLEXIL)
  is a synchronous language 
  developed by NASA to support autonomous spacecraft operations.
  In this paper, we propose a rewriting logic semantics of PLEXIL in
  Maude, a high-performance logical engine. The rewriting logic
  semantics is by itself a formal interpreter of the language and can
  be used as a semantic benchmark for the implementation
  of PLEXIL executives. The implementation in Maude has the additional 
  benefit of making available to PLEXIL designers and developers 
  all the formal analysis and verification tools
  provided by Maude.
  The formalization of the PLEXIL semantics in
  rewriting logic poses an interesting challenge due to the synchronous
  nature of the language and the prioritized rules defining 
  its semantics. To overcome this difficulty, we propose a 
  general procedure for simulating synchronous set relations
  in rewriting logic that is sound and, for deterministic relations, 
  complete. We also report on two issues at the design level of the 
  original PLEXIL semantics that were identified with the help 
  of the executable specification in Maude.
\end{abstract}

\section{Introduction}
\label{sec.intro}

Synchronous languages were introduced in the 1980s
to program {\em reactive systems}, i.e., systems whose behavior is
determined by their continuous reaction to the environment where they are deployed. 
Synchronous  languages are often used to program embedded applications
and  automatic control software. 
The family of synchronous languages is characterized 
by the {\em synchronous hypothesis}, 
which states that a reactive system is
arbitrarily fast and able to react immediately in
no time to stimuli from the external environment.
One of the main consequences of the synchronous
hypothesis is that components running in parallel
are perfectly synchronized and cannot arbitrarily
interleave. The implementation of a synchronous language usually
requires the simulation of  the synchronous semantics into
an asynchronous computation model. This simulation
must ensure the validity of the synchronous hypothesis in the target 
asynchronous model.

The {\em Plan Execution Interchange Language} ~(PLEXIL)~\cite{plexil}
is a synchronous language developed 
by NASA to support autonomous spacecraft operations.
Space mission operations require flexible, efficient and 
reliable plan execution. The computer system on board the spacecraft
that executes plans is called the {\em executive} and it is a safety-critical
component of the space mission. The {\em Universal Executive} ~(UE)~\cite{ue}
is an open source PLEXIL executive developed 
by NASA\footnote{\url{http://plexil.sourceforge.net}.}. PLEXIL and the UE
have  been used on mid-size applications such as 
robotic rovers and a prototype of a Mars drill, and to demonstrate automation
for the International Space Station.

Given the critical nature of spacecraft operations, PLEXIL's
operational semantics has been formally defined~\cite{DMP08} and
several properties of the language, such as determinism and compositionality,
have been mechanically
verified~\cite{DMP07ICAPSW} in the Prototype Verification System
(PVS)~\cite{ORS92}.  The formal small-step  semantics is  defined using 
a compositional  layer of five reduction relations on sets of {\em nodes}. These nodes
are the building blocks of a PLEXIL plan and 
represent the hierarchical decomposition of tasks. 
The {\em atomic relation} describes
the execution of an individual node in terms of state
transitions triggered by changes in the environment. The
{\em micro relation} describes the {\em synchronous} reduction
of the atomic relation with respect to the {\em maximal redexes
strategy}, i.e., the synchronous application of the atomic
relation to the maximal set of nodes of a plan. The remaining
three relations are the {\em quiescence relation}, the {\em macro
relation} and the {\em execution relation} which describe
the reduction of the micro relation until normalization,
the interaction of a plan with the external environment, and
the $n$-iteration of the macro relation corresponding to
$n$ time-steps, respectively. From an operational point of view,
PLEXIL is more complex than general purpose synchronous 
languages such as Esterel~\cite{Esterel00} or Lustre~\cite{LustrePOPL87}.
PLEXIL is designed specifically for flexible and reliable  
command execution in autonomy applications.


In this paper, we propose a rewriting logic semantics of PLEXIL in 
Maude~\cite{maude-book} that complements the small-step structural
operational semantics written in PVS. In contrast to the PVS higher-order logic specification, the 
rewriting logic semantics of PLEXIL is executable and it
is by itself an interpreter of the language. This interpreter is intended to be
a semantic benchmark for validating the implementation of PLEXIL executives such as the Universal
Executive and a testbed for designers of the language to study new features or possible 
variants of the language. Additionally, by using a graphical interface~\cite{RMC09}, 
PLEXIL developers will be able to exploit the formal analysis tools provided by Maude to verify properties
of actual plans.

Rewriting logic is a logic of concurrent change in which
a wide range of models of computation and logics can be
faithfully represented. 
The rewriting semantics of a synchronous language such as PLEXIL poses 
interesting practical challenges because Maude implements the maximal 
concurrency of rewrite rules by interleaving concurrency. That is, 
although rewriting logic allows for concurrent synchronous specifications at 
the mathematical level, Maude executes the rewrite rules by interleaving 
concurrency. To overcome this situation, we develop a {\em serialization procedure}
that allows for the simulation of a synchronous relation
via set rewriting systems. This procedure is presented in a library of abstract set
relations that we have written in PVS.  The procedure is sound and
complete for the {\em synchronous closure} of any deterministic relation
under the {\em maximal redexes strategy}.


We are collaborating with the PLEXIL development team at NASA Ames
by using the rewriting logic semantics of PLEXIL
to validate the intended semantics of the language against a wide 
variety of plan examples. We report on two 
issues of PLEXIL's original semantics that were discovered with the help of 
the rewriting logic semantics of PLEXIL presented in this paper:
the first was found at the level of the atomic relation for which
undesired interleaving semantics were introduced in some computations,
and the second was found at the level of the micro relation for
which spurious infinite loops were present in some computations.
Solutions to both issues were provided by the authors, and have been adopted
in the latest version of the PLEXIL semantics.

Summarizing, the contributions presented in this paper are:

\begin{itemize}
	\item The rewriting logic specification of the PLEXIL semantics.
	\item A library of abstract set relations suitable for the definition and verification
                  of synchronous relations. 
	\item A serialization procedure for the simulation of synchronous relations by
                  rewriting, and an equational version of it in rewriting logic
                  for deterministic synchronous relations.
	\item The findings on two issues in the design of 
	the original PLEXIL semantics, and the corresponding solutions that 
	were adopted in an updated version of the language semantics.

\end{itemize}

\paragraph{Outline of the paper.} Background on rewriting logic, and
the connection between this logic and Structural Operational Semantics 
are summarized in Section~\ref{sec.rl-sos}.
In Section~\ref{sec.library} we present the library of set relations, including the soundness and completeness
proof of the serialization procedure. 
Section~\ref{sec.plexilrls} describes the rewriting logic semantics of 
PLEXIL. In Section~\ref{sec.results} we discuss preliminary results.
Related work and concluding remarks are presented in Section~\ref{sec.concl}.

\section{Rewriting Logic and Structural Operational Semantics}
\label{sec.rl-sos}

Rewriting logic~\cite{unified-tcs} is a general semantic framework
that unifies in a natural way a wide range of models of
concurrency. 
Language specifications can be executed in Maude,
a high-performance rewriting logic implementation, and
benefit from a wide set of formal analysis tools
available to it, such as Maude's LTL Model Checker.

A {\em rewriting logic specification} or {\em theory} is
a tuple $\thname{R} = (\Sigma, E \cup A , R)$ where:

\begin{itemize}
	\item $(\Sigma, E \cup A)$ is a membership equational logic
	theory with $\Sigma$ a signature
	having a set of kinds, a family of sets of operators, and
	a family of disjoint sets of sorts; $E$ a set of 
	$\Sigma$-sentences, which are universally quantified
	Horn clauses with atoms that are equations $(t = t')$ and
	memberships $(t:s)$, with $t,t'$ terms and $s$ a sort; 
  $A$ a set of ``structural'' axioms (typically associativity
  and/or commutativity and/or identity) such that there
  exists a {\em matching algorithm modulo} $A$ producing
  a finite number of $A$-matching substitutions; and

	\item $R$ a set of universally quantified {\em conditional
	rewrite rules} of the form
	\[(\forall X)\; r : t \longrightarrow t'\;\mathrm{if}\; 
	  \bigwedge_i u_i = u_i' \land \bigwedge_j v_j:s_j \land
	  \bigwedge_l w_l \longrightarrow w_l'\]
	  where $X$ is a set of sorted variables, 
	  $r$ is a label, $t,t',u_i,u_i',v_j,w_l$ and $w_l'$
	  are terms with variables among those in $X$, 
	  and $s_j$ are sorts.
\end{itemize}

Intuitively, $\thname{R}$ specifies a {\em concurrent system},
whose states are elements of the initial algebra $T_{\Sigma/E\cup A}$
specified by the theory $(\Sigma, E \cup A)$ and whose
{\em concurrent transitions} are specified by the rules $R$.
Concurrent transitions are deduced according to the set of
inference rules of rewriting logic, which are described in
detail in~\cite{bruni06} (together with a precise account of the
more general forms of rewrite theories and their models). 
Using these inference rules,
a rewrite theory $\thname{R}$ proves a statement of the form
$(\forall X)\;t \longrightarrow t'$, written as $\thname{R} \vdash
(\forall X)\;t \longrightarrow t'$, meaning that, in $\thname{R}$,
the state term $t$ can transition to the state term $t'$ in
a finite number of steps. A detailed discussion of rewriting
logic as a unified model of concurrency and its inference system
can be found in~\cite{unified-tcs}.

We have a one-step rewrite $[t]_{E\cup A}
\longrightarrow_\thname{R} [t']_{E \cup A}$ in $\thname{R}$
iff we can find a term $u \in [t]_{E\cup A}$ such that
$u$ can be rewritten to $v$ using some rule 
$r : a \longrightarrow b \;{\rm if}\;C \in R$ in the
standard way (see~\cite{dershowitz90}),
denoted $u \longrightarrow_R v$, and we furthermore have
$v \in [t']_{E\cup A}$. For arbitrary 
$E$ and $R$, whether $[t]_{E\cup A} \longrightarrow_\thname{R} 
[t']_{E \cup A}$ holds is in general {\em undecidable}, 
even when the equations in $E$ are confluent and 
terminating modulo $A$. Therefore, the most useful 
rewrite theories satisfy additional 
executability conditions under which we can reduce 
the relation $[t]_{E\cup A} \longrightarrow_\thname{R} [t']_{E \cup A}$
to simpler forms of rewriting just modulo $A$,
where both equality modulo $A$ and matching modulo $A$ are
decidable.

The first condition is that $E$ should be {\em terminating}
and {\em ground confluent} modulo $A$~\cite{dershowitz90}. This
means that in the rewrite theory $\thname{R}_{E/A} = (\Sigma, A, E)$,
(i) all rewrite sequences terminate, that is, there are no infinite
sequences of the form $[t_1]_A\longrightarrow_{\thname{R}_{E/A}} [t_2]_A
\cdots [t_n]_A \longrightarrow_{\thname{R}_{E/A}} [t_{n+1}]_A\cdots $, and
(ii) for each $[t]_A \in T_{\Sigma/A}$ there is a {\em unique} 
$A$-equivalence class $[{\sf can}_{E/A}(t)]_A \in T_{\Sigma/A}$ called
the {\em $E$-canonical form} of $[t]_A$ modulo $A$ such that there
exists a terminating sequence of zero, one, or more steps
$[t]_A \longrightarrow^*_{\thname{R}_{E/A}} [{\sf can}_{E/A}(t)]_A$.

The second condition is that the rules $R$ should be {\em coherent}~\cite{viry-tcs}
relative to the equations $E$ modulo $A$. This precisely means that, if
we decompose the rewrite theory $\thname{R} = (\Sigma, E \cup A , R)$ into
the simpler theories $\thname{R}_{E/A} = (\Sigma, A, E)$ and 
$\thname{R}_{R/A} = (\Sigma, A , R)$, which have decidable rewrite
relations $\longrightarrow_{\thname{R}_{E/A}}$ and
$\longrightarrow_{\thname{R}_{R/A}}$ because of the assumptions
of $A$, then for each $A$-equivalence class $[t]_A$ such that
$[t]_A \longrightarrow_{\thname{R}_{R/A}}[t']_A$ we can always find
a corresponding rewrite $[{\sf can}_{E/A}(t)]_A \longrightarrow_{\thname{R}_{R/A}}
[t'']_A$ such that $[{\sf can}_{E/A}(t')]_A = [{\sf can}_{E/A}(t'')]_A$.
Intuitively, coherence means that we can always adopt the strategy of
first simplifying a term to canonical form with $E$ modulo $A$, and then
apply a rule with $R$ modulo $A$ to achieve the effect of rewriting
with $R$ modulo $E \cup A$.

The conceptual distinction between equations and rules has
important consequences when giving the 
rewriting logic semantics of a language $L$ as a rewrite theory
$\thname{R}_L=(\Sigma_L,E_L \cup A_L,R_L)$.
Rewriting logic's {\em abstraction dial}~\cite{meseguerrosu07} captures precisely this 
conceptual distinction. One of the key features of Structural Operational Semantics
is that it provides a step-by-step formal description
of a language's evaluation mechanisms~\cite{Plotkin:NatSemTR}. Setting 
the level of abstraction in which the interleaving
behavior of the evaluations in $L$ is observable, corresponds 
to the special case in which the dial is {\em turned down to its minimum position}
by having $E_L \cup A_L = \emptyset$. The abstraction dial can also be {\em turned
up to its maximal position} as the special case in which $R_L = \emptyset$,
thus obtaining an equational semantics of the language. In general,
we can make a specification as {\em abstract as
we want} by identifying a subset $R'_L\subseteq R_L$ such that
the rewrite theory $(\Sigma_L, (E_L \cup R'_L)\cup A_L , R_L \setminus R'_L)$
satisfies the executability conditions aforementioned. We refer
the reader to~\cite{verdejo06,meseguerrosu07,serbanuta09} for an
in-depth presentation of the relationship between structural operational
semantics and rewriting logic semantics, and the use of equations
and rules to capture in rewriting logic the dynamic behavior of
language semantics.

The conceptual distinction between equations and rules also 
has important practical consequences
for {\em program analysis}, because it affords massive {\em state space
reduction} which can make formal analyses such as breadth-first search
and model checking enormously more efficient. Because of state-space
explosion, such analyses could easily become infeasible if we were
to use a specification in which all computation steps are described
with rules.

\section{A Rewriting Library for Synchronous Relations}
\label{sec.library}

When designing a programming language, it is useful to be able to
define its semantic relation, to formally prove properties of this
relation and to execute it on particular programs. However, defining
such a semantic relation and formally reasoning about it is generally
difficult, time consuming, and error-prone. This would be a major
endeavor if it had to be done from scratch for each language.
Moreover, since programming languages tend to evolve constantly, tools
must allow reusing parts of former developments to support rapid yet
correct prototyping.

Fortunately, small-step operational semantic relations 
are, in general, built from simple
relations with a limited number of operations, such as
reflexive-transitive extension, reduction to normal form, parallel
extension, etc.  As a minimum, the framework should include a library
containing the definitions of these operations and formal proofs of
their properties. This will considerably reduce the amount of work
needed to define the semantic relation of {\em particular} programming
languages and to formally prove their properties. Defining the semantic relation
of synchronous languages requires defining the synchronous 
extension of an atomic execution relation, an operation that 
has been much less studied formally than other relation operations 
such as the reflexive-transitive extension or the parallel
extension.

We present in this section a first attempt to design a framework for
rapid yet correct prototyping of semantic relations, in particular of
synchronous languages.
This framework allows one to define semantic relations, to execute
them on particular programs and to formally prove some of their
properties using general theorems about the operations that
permit to build relations from relations.
We have been
experimenting with this framework using various versions of the PLEXIL language 
(see Section \ref{sec.plexilrls}). 

The definitions and properties presented in Section~\ref{sec.setr} have been
developed in PVS. The Maude engine is used for executing the semantic
relations on particular programs. The full development
of the framework,
including the formal semantics of PLEXIL, is available
from~\url{http://research.nianet.org/fm-at-nia/PLEXIL}.

\subsection{Set Relations and Determinism}
\label{sec.setr}

Let $\lra$ be a binary relation on a set $T$. 
We say that $a \in T$ is a {\em redex} if there exists $a' \in T$ such
that $a \lra a'$, and that it is a {\em normal form} otherwise.
We denote by $\lrn{0}$, $\lrn{n}$, and $\lrn{*}$, the {\em identity} 
relation, {\em $n$-fold composition}, and {\em reflexive-transitive closure} 
of $\lra$, respectively.

In addition to the above relations, we also define the 
{\em normalized reduction relation} $\lrn{\downarrow}$ of 
$\lra$.

\begin{definition}[Normalized reduction]
$a \lrn{\downarrow} a'$ if and only if $a \lrn{*} a'$ and $a'$ is a 
normal form.
\end{definition}

Henceforth, we assume that the relation $\lra$ is defined on sets 
over an abstract type $T$, i.e., 
$\lra\ \subseteq \mathcal{P}(T)\times\mathcal{P}(T)$. 
We define the {\em asynchronous} extension of $\lra$, denoted $\async$,
as the congruence closure of $\lra$ and the {\em parallel} extension of
$\lra$, denoted $\para$, as the parallel closure of $\lra$. 

\begin{definition}[Asynchronous extension]
$a \async a'$ if and only if there exist  
sets $b$ and $b'$ such that $b \subseteq a$, $b \neq \varnothing$, 
$b \lra b'$ 
and $a' = (a \setminus b) \cup b'$.
\end{definition}

\begin{definition}[Parallel extension]
$a \para a'$ if and only if there exist $b_1,\ldots, b_n$, nonempty,
pairwise disjoint subsets of $a$, and sets $b'_1,\ldots, b'_n$ such that 
$b_i \lra b'_i$ and $a' = (a \setminus \bigcup_{i} b_i) \cup \bigcup_{i} b'_i$.
\end{definition}



The definition of a synchronous reduction requires the definition
of a strategy that selects the redexes to be synchronously reduced. 

\begin{definition}[Strategy]
A {\em strategy} is a function mapping elements
$a \in \mathcal{P}(T)$ into  $b_1, \ldots, b_n$, nonempty, pairwise 
disjoint subsets of $a$ such that all $b_i$ are redexes for $\lra$.
\end{definition}

\begin{definition}[Synchronous extension]
Let $s$ be a strategy, $a \syncr{s} a'$ if and only if there exist 
$b'_1, \ldots, b'_n$
such that $s(a) = \{b_1, \ldots, b_n\}$, $b_i \lra b'_i$ and $a' = (a
\setminus \bigcup_{i} b_i) \cup \bigcup_{i} b'_i$.
\end{definition}


A natural way of defining strategies is via priorities. A {\em priority}
is a function $p$ that maps elements $a \in \mathcal{P}(T)$ into
natural numbers. 

\begin{definition}[Maximal redex]
Let $a \in \mathcal{P}(T)$ and let $p$ be a priority function. 
A nonempty subset $b$ of $a$ is said to be a 
{\em maximal redex} of $a$ if it is a redex, and for all nonempty 
subsets $c$ of $a$ such that $c$ is a redex, $c \neq b$ and 
$c \cap b \neq \varnothing$, we have $p(b) > p(c)$. 
By construction, the set of maximal redexes of a set are pairwise disjoint.
The {\em maximal redexes strategy} is the function that,
given a priority function, 
maps elements $a \in \mathcal{P}(T)$ into the set of its maximal redexes.
\end{definition}


In addition to the definition of the relation operators presented
here, our library includes formal proofs of
properties related to determinism and compositionality for abstract set relations. 
In this paper, we will focus on determinism as this property is fundamental to 
the specification of synchronous relations in rewriting logic.

\begin{definition}[Determinism]
A binary relation $\lra$ defined on a set $T$ is said to be {\em
deterministic} if for all $a$, $a'$ and $a''$ in $T$, $a \lra a'$ and $a
\lra a''$ implies $a' = a''$.
\end{definition}

Determinism is a stronger property than confluence, i.e.,
a deterministic relation is also confluent, but a confluent relation is not necessarily deterministic.

\begin{proposition}[Determinism of $\lrn{n}$, $\lrn{\downarrow}$ and 
$\syncr{s}$]
\label{prop.detsync}
If the relation $\lra$ is deterministic, then so are the 
relations $\lrn{n}$, $\lrn{\downarrow}$, and
$\syncr{s}$.
\end{proposition}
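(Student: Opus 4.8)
The plan is to treat the three relations in the order $\lrn{n}$, $\lrn{\downarrow}$, $\syncr{s}$, using each result in establishing the next.

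For $\lrn{n}$, I would argue by induction on $n$. The base case $\lrn{0}$ is the identity relation, hence trivially deterministic. For the inductive step, use the fact that $a \lrn{n+1} a'$ holds iff there is some $b$ with $a \lra b$ and $b \lrn{n} a'$; if also $a \lrn{n+1} a''$ through some $b''$, then determinism of $\lra$ gives $b = b''$, and the induction hypothesis applied to $\lrn{n}$ yields $a' = a''$.

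For $\lrn{\downarrow}$, suppose $a \lrn{\downarrow} a'$ and $a \lrn{\downarrow} a''$, so that $a \lrn{m} a'$ and $a \lrn{n} a''$ with both $a'$ and $a''$ normal forms; without loss of generality assume $m \le n$. Decomposing the $n$-step reduction as $m$ steps followed by $n - m$ steps, obtain an intermediate $c$ with $a \lrn{m} c$ and $c \lrn{n-m} a''$. Determinism of $\lrn{m}$, already established, forces $c = a'$, hence $a' \lrn{n-m} a''$. But $a'$ is a normal form and therefore not a redex, so this reduction can hold only when $n - m = 0$, i.e.\ when it is $\lrn{0}$, whence $a' = a''$. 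This is the one place where something beyond routine bookkeeping is needed, and it is the step I would flag as the main (mild) obstacle: one has to rule out the two normalizing sequences having different lengths, which is precisely where the normal-form hypothesis and the already-proved determinism of $\lrn{m}$ combine. (Note that $\lrn{*}$ is deliberately absent from the statement, since it fails to be deterministic as soon as $\lra$ has a redex, because of reflexivity.)

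For $\syncr{s}$, the key points are that a strategy $s$ is a function, so $s(a) = \{b_1, \ldots, b_n\}$ is uniquely determined by $a$, and that each $b_i$ is, by the definition of a strategy, a redex, so determinism of $\lra$ provides a unique $b'_i$ with $b_i \lra b'_i$. Since $\bigcup_i b_i$ and $\bigcup_i b'_i$ do not depend on how the finite unordered family $s(a)$ is indexed, the set $a' = (a \setminus \bigcup_i b_i) \cup \bigcup_i b'_i$ is a function of $a$ alone; hence $a \syncr{s} a'$ and $a \syncr{s} a''$ imply $a' = a''$.
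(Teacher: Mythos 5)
Your proof is correct. The paper itself does not print a proof of this proposition --- it defers the formal proofs of the determinism properties to the accompanying PVS library --- so there is nothing in the text to compare against, but your argument is the natural one: induction on $n$ for $\lrn{n}$, the length-comparison argument using the normal-form hypothesis for $\lrn{\downarrow}$, and functionality of the strategy $s$ combined with determinism of $\lra$ for $\syncr{s}$. Your parenthetical observations (that $\lrn{*}$ fails to be deterministic, and that the union in the definition of $\syncr{s}$ is independent of the indexing of $s(a)$) are both consistent with the paper's remarks and are exactly the points worth flagging.
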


In contrast, even if the relation $\lra$ is deterministic, 
the relations $\lrn{*}$, $\async$ and $\lrn{\parallel}$ are not always
deterministic. 

\subsection{Executing Semantic Relations}
\label{sec.exec}

Executing the semantic relation of a programming language is desirable
during the design phase of the language. In
particular, it allows the designer of the features to experiment with
different semantic variants of the language before implementing
them. 
 	
Rewrite systems are a computational way of defining binary
relations. Since our formalism is based on set relations, we consider
rewrite systems on an algebra of {\em terms} of type $T$ modulo
associativity, commutativity, identity, and idempotence: the
basic axioms for the union of sets.  We denote
the equality on terms of this algebra by $\eqaci$.  The relation
$\lra$ defined by a rewrite system $\cal R$ is defined as follows.

\begin{definition}[Relation defined by a rewrite system]
$a \lra b$ if and only if there exists a rewrite rule $l \rew r$ in ${\cal R}$ 
and a substitution $\sigma$ such that $a \eqaci \sigma l$ and $b \eqaci \sigma r$. 
\end{definition}
 
We remark that the previous definition uses the substitution closure 
of the rewrite system, rather than the more traditional 
definition based on the congruence closure. For example, if we consider the
rewrite system
\begin{eqnarray*}
A(x) & \rew &B(x),
\end{eqnarray*}
we have that $A(0) \lra B(0)$ and $A(1) \lra B(1)$. On the other hand,
$A(0),A(1)$ is not a redex for $\lra$.

The synchronous extension of a relation $\lra$ challenges the standard 
asynchronous interpretation of rewrite systems. Consider again the 
previous example. The asynchronous extension of $\lra$ 
defined in Section~\ref{sec.setr}, which indeed encodes the congruence closure,
relates $A(0),A(1) \async B(0),A(1)$ and  $A(0),A(1) \async A(0),B(1)$. However,
it does not relate $A(0),A(1)$ to $B(0),B(1)$, which corresponds to the 
parallel reduction of both $A(0)$ and $A(1)$.
In this particular case, we have that 
$A(0),A(1) \async B(0),A(1) \async B(0),B(1)$. 

We remark that if $a \syncr{s} b$,
for a strategy $s$, then $a {\async}^* b$. However, in order to select the
redexes to be reduced, we need additional machinery. In particular, we need
to keep a log book of redexes that need to be reduced
and redexes that have been already reduced.
We propose the following procedure to implement 
in an asynchronous rewrite engine, such as Maude, the synchronous 
extension of a relation for a strategy.

\begin{definition}[Serialization procedure]
Let $\rightarrow$ be a relation and $s$ a strategy.
Given a term $a \in \mathcal{P}(T)$, we compute a term $b$ as follows.
\begin{enumerate}
\item Reduce the pair $\langle \bigcup s(a)\ ;\ \varnothing\rangle$ to a normal
form $\langle \varnothing\ ;\ a'\rangle$ using the following rewrite system:
\begin{eqnarray*}
\langle a_i,c\ ;\ d \rangle&\rew&\langle c\ ;\ a'_i,d\rangle, 
\end{eqnarray*}
where $a_i \lra a'_i$. 
\item  The term $b$ is defined as $(a \setminus \bigcup s(a)) \cup a'$.
\end{enumerate}
\end{definition}
Since a strategy is a set of redexes, and this set is finite,
the procedure is well-defined, i.e., it always terminates and returns
a term. However, the procedure is not necessarily deterministic.

In our previous example, we want to apply 
the procedure to $A(0),A(1),B(1)$ using the maximal redexes 
strategy $\texttt{max}_p$ (assuming that all terms have the 
same priority). Since 
$\texttt{max}_p(\{A(0),A(1),B(1)\}) = \{A(0),A(1)\}$, we have to
reduce the pair $\langle A(0),A(1)\ ;\ \varnothing\rangle$ to its
normal form $\langle \varnothing\ ;\ B(0),B(1)\rangle$. Then, 
we compute $\{A(0),A(1),B(1)\}\ \setminus\ \{A(0),A(1)\}\ \cup\ \{B(0),B(1)\}$,
which is equal to $B(0),B(1)$. We check that 
$A(0),A(1),B(1) \syncr{s} B(0),B(1)$. 

\begin{theorem}[Correctness of serialization procedure]
The serialization procedure is {\em sound}, i.e., if the procedure
returns $b$ from $a$, then $a \syncr{s} b$. Furthermore, if $\lra$ is
deterministic, the procedure is {\em complete}, i.e., if $a \syncr{s}
b$ then the procedure returns $b$ from $a$.
\end{theorem}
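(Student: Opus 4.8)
The plan is to unwind the definitions of the serialization procedure and of the synchronous extension $\syncr{s}$, reducing the statement to two elementary facts about the auxiliary rewrite system of step~1. Write $s(a)=\{b_1,\dots,b_n\}$, so that $\bigcup s(a)=\bigcup_i b_i$ with the $b_i$ nonempty, pairwise disjoint redexes for $\lra$. First I would prove a normalization lemma for step~1: starting from $\langle \bigcup s(a)\ ;\ \varnothing\rangle$, every rewrite sequence with $\langle a_i,c\ ;\ d\rangle\rew\langle c\ ;\ a'_i,d\rangle$ terminates, since each step strictly decreases the number of not-yet-reduced components of $s(a)$ in the first coordinate; and a normal form is necessarily of the shape $\langle \varnothing\ ;\ a'\rangle$, because as long as the first coordinate is nonempty it still contains a redex supplied by $s(a)$ and hence an instance of the rule applies. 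Tracing such a sequence, each $b_i$ is consumed exactly once, producing some $b'_i$ with $b_i\lra b'_i$ and depositing it in the second coordinate, so at termination $a'=\bigcup_i b'_i$ (the union being read modulo $\eqaci$).

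With this lemma, soundness is immediate: if the procedure returns $b$, then by step~2 and the lemma $b=(a\setminus\bigcup s(a))\cup a'=(a\setminus\bigcup_i b_i)\cup\bigcup_i b'_i$, where $s(a)=\{b_1,\dots,b_n\}$ and $b_i\lra b'_i$ for every $i$; this is exactly a witnessing instance of the definition of $a\syncr{s}b$. Note the argument does not depend on the order in which step~1 reduces the $b_i$ nor on which reducts $b'_i$ are selected, so soundness holds for every run of the (possibly nondeterministic) procedure.

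For completeness, assume $\lra$ is deterministic and $a\syncr{s}b$, so $s(a)=\{b_1,\dots,b_n\}$, $b_i\lra b'_i$ and $b=(a\setminus\bigcup_i b_i)\cup\bigcup_i b'_i$. Running step~1 and invoking the normalization lemma yields $\langle \varnothing\ ;\ a'\rangle$ with $a'=\bigcup_i c_i$ and $b_i\lra c_i$; determinism of $\lra$ forces $c_i=b'_i$, hence $a'=\bigcup_i b'_i$, and step~2 returns $(a\setminus\bigcup s(a))\cup\bigcup_i b'_i=b$. Since determinism makes every run end in the same $a'$, the procedure is itself deterministic in this case, consistent with the determinism of $\syncr{s}$ recorded in Proposition~\ref{prop.detsync}.

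The delicate point, and the one I would treat most carefully, is the normalization lemma, specifically the claim that step~1 consumes exactly the components of $s(a)$, each once. This hinges on reading the rule $\langle a_i,c\ ;\ d\rangle\rew\langle c\ ;\ a'_i,d\rangle$ so that $a_i$ ranges over the redexes $b_j$ furnished by the strategy, not over arbitrary redex sub-multisets of the first coordinate: for a general strategy the latter could straddle two components $b_j,b_k$ and reduce their union in one shot, destroying the exact correspondence with the definition of $\syncr{s}$. Once the first coordinate is understood as carrying the bag structure of $s(a)$, so that matching pops one strategy component at a time, the rest — the termination measure, the characterization of normal forms, and the equational bookkeeping modulo $\eqaci$ — is routine.
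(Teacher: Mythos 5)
Your proof is correct and follows essentially the same route as the paper's: soundness by reading off a witnessing instance of $\syncr{s}$ from the normal form $\langle \varnothing\ ;\ a'\rangle$ of the auxiliary pair, and completeness from determinism. The only differences are that you make explicit the normalization lemma (termination, shape of normal forms, one-component-at-a-time matching) that the paper leaves implicit, and that for completeness you apply determinism of $\lra$ componentwise where the paper instead invokes Proposition~\ref{prop.detsync} to get determinism of $\syncr{s}$ and hence uniqueness of the returned term --- these are interchangeable.
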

\begin{proof}
\begin{description}
\item[Soundness]  Assume that the procedure returns 
        $b = a \setminus \bigcup s(a) \cup a'$
	from~$a$. We have to prove that $a \syncr{s} b$.
        Let $s(a) = \{a_1, \dots, a_n\}$, 
        where $a_i \subseteq a$,
	for $1 \leq i \leq n$.
        From the definition of a strategy, 
        the elements in $s(a)$ are pairwise disjoint. Then, from the procedure,
	$a'= a_1',\dots,a_n'$, where $a_i \rightarrow a_i'$, for
	$1 \leq i \leq n$.
	Let $c \subseteq a$ be such that none of the subsets of $c$ is in
        $s(a)$. Then, $a$ has the form $a = a_1, \dots , a_n ,c$. Hence,
        $b= a_1',\dots,a_n',c$. By definition of $\syncr{s}$, we have that
        $a_1, \dots , a_n , c \syncr{s} a_1', \dots, a'_n , c$.
\item[Completeness] In this case, it suffices to note that by 
     Proposition~\ref{prop.detsync}, if $\lra$ is deterministic, then
     $\syncr{s}$ is deterministic. Therefore, the normal form of 
     $\langle \bigcup s(a)\ ;\ \varnothing\rangle$ is unique and the
     procedure returns a unique $b= a \setminus \bigcup s(a) \cup a'$. This
     $b$ is the only term that is related to $a$ in the relation 
     $\syncr{s}$.
\end{description}
\end{proof}

\section{Rewriting Logic Semantics of PLEXIL}
\label{sec.plexilrls}

The framework presented in Section~\ref{sec.library} is abstract with respect to the elements in the set $T$ and the basic
set relation $\lra$. If we consider that $T$ is a set of PLEXIL nodes and $\lra$ is PLEXIL's atomic relation, we can
deduce by Proposition~\ref{prop.detsync} that, since PLEXIL's atomic relation is deterministic~\cite{DMP07ICAPSW} , 
PLEXIL's micro and quiescence relations are deterministic as well. Therefore, we can use the serialization procedure
presented  in Section~\ref{sec.exec} to implement a sound and complete formal interpreter of PLEXIL in Maude.  

In this section, we describe in detail the specification of such an interpreter.
We only discuss the atomic and micro relations since they are
the most interesting ones for validating the synchronous semantics
of PLEXIL. 
More precisely, we present the rewrite theory 
$\thname{R}_\plexil = (\Sigma_\plexil,E_\plexil \cup A_\plexil,R_\plexil)$,
specifying the rewriting logic semantics for PLEXIL's atomic and micro
relations.  We use the determinism property of PLEXIL's atomic relation 
to encode it as the computation rules in $E_\plexil$ because it
yields a confluent equational specification. Consequently,
the serialization procedure for PLEXIL's synchronous semantics into 
rewriting logic can be defined equationally,
thus avoiding the interleaving semantics associated with rewrite rules
in Maude. Of course, due to the determinism property
of the language, one can as well turn up the ``abstraction dial''
to its maximum
by 
making the rewrite rules $R_\plexil$ into
computational rules. This will result in a faster interpreter,
for example. Nevertheless, 
we are interested in PLEXIL
semantics at the observable level of the micro relation.
Therefore, in the rewrite theory $\thname{R}_\plexil$: 
(i) the equational theory $(\Sigma_\plexil,E_\plexil \cup
A_\plexil)$ defines the semantics of the atomic relation
and specifies the serialization procedure for the synchronous
semantics of PLEXIL, and (ii) the rewrite
rules $R_\plexil$ define the semantics of the micro relation.

In this section we assume the reader is familiar with the
syntax of Maude~\cite{maude-book}, which is very close to standard
mathematical notation.

\subsection{PLEXIL Syntax}
\label{sec.plexil.syntax}

A PLEXIL \emph{plan} is a tree of \emph{nodes}
representing a hierarchical decomposition of tasks. 
The interior nodes in a plan provide the control structure and 
the leaf nodes represent primitive actions. The purpose of each
node determines its {\em type}: {\tt List} nodes group other nodes and provide
scope for local variables,
{\tt Assignment} nodes assign values to variables (they also have
a {\em priority}, which serves to solve race conditions between assignment
nodes), {\tt Command} nodes
represent calls to commands, and {\tt Empty} nodes do nothing.
Each PLEXIL node has 
\emph{gate conditions} and \emph{check
conditions}. The former specify when the node should start executing,
when it should finish executing, when it should be repeated, and when
it should be skipped. Check conditions specify flags to detect when 
node execution fails due to violations of pre-conditions, post-conditions, or 
invariants. Declared \emph{variables} in nodes have lexical scope, that
is, they are accessible to the node and all its descendants, but not
siblings or ancestors. The \emph{execution status} of a node is given by
status such as \texttt{Inactive}, \texttt{Waiting}, \texttt{Executing},
etc. The {\em execution state} of a plan consists of (i) the \emph{external state} 
corresponding to a set of {\em environment variables} accessed 
through \emph{lookups} on environment variables, and (ii) the {\em internal state}
which is a set of nodes and (declared) variables.

\begin{figure}[htp]
\hspace{2.3cm}\begin{tabular}{|c|}
\hline \\
\begin{minipage}{11cm}
%
\begin{alltt}{\scriptsize
List SafeDrive \{
  int pictures = 0;
  End:
    LookupOnChange(WheelStuck) == true  OR  pictures == 10;
  List Loop \{
    Repeat-while:
      LookupOnChange(WheelStuck) == false;
    Command OneMeter \{
      Command: Drive(1);
    \}
    Command TakePic \{
      Start: OneMeter.status == FINISHED  AND  pictures < 10;
      Command: TakePicture();
    \}
    Assignment Counter \{
      Start: TakePic.status == FINISHED;
      Pre: pictures < 10;
      Assignment: pictures := pictures + 1;
    \}
  \}
\}}

\end{alltt}
\end{minipage} \\
\hline
\end{tabular}
	\caption{SafeDrive: A PLEXIL Plan Example}
	\label{fig:safedrive}
\end{figure}

Figure~\ref{fig:safedrive} illustrates with a simple example
the standard syntax of PLEXIL. 
In this particular example, the plan tasks are represented by the root node 
{\tt SafeDrive}, the interior node  {\tt Loop}, and the leaf nodes {\tt OneMeter}, 
{\tt TakePic} and {\tt Counter}. 
{\tt OneMeter} and {\tt TakePic} are, for example, nodes of type {\tt Command}.
The node {\tt Counter} has two different conditions: {\tt Start} is a gate condition
constraining the execution of the assignment to start only when
the node {\tt TakePic} is in state {\tt Finished}, while {\tt Pre} is a 
check condition for the number of pictures to be less than 10. The
internal state of the plan at a particular moment is represented 
by the set of all nodes of the plan, plus the value of the variable 
{\tt pictures}, while the external state of the plan contains the
(external) variable {\tt WheelStuck}.

The external state of a plan is defined in the functional module 
{\tt EXTERNAL-STATE-SYNTAX}. The sort {\bf ExternalState} represents  
sets of elements of sort {\bf Pair}, each of the form $(name,value)$;  
we assume that the sorts {\bf Name} and {\bf Value}, specifying names  
and values, respectively, have been defined previously in the functional  
modules {\tt NAME} and {\tt VALUE}, respectively.

\begin{alltt}{\scriptsize
fmod EXTERNAL-STATE-SYNTAX is 
  protecting Name .  protecting Value .
  sort Pair . 
  op (_,_) : Name Value -> Pair . 
  sort ExternalState . 
  subsort Pair < ExternalState . 
  op mtstate : -> ExternalState . 
  op _,_ : ExternalState ExternalState -> ExternalState [assoc comm id: mtstate] . 
  eq ES:ExternalState , ES:ExternalState = ES:ExternalState .  
endfm
}\end{alltt}
 
The internal state of a plan is represented with the help of Maude's built-in
{\tt CONF} module supporting object based programming. The internal state
has the structure of a {\em set} made up of objects and messages, called
{\em configurations} in Maude, where the objects represent the nodes and 
(declared) variables of a plan. Therefore, we can view the infrastructure of the internal state as 
a configuration built up by a binary set union operator with empty syntax, i.e., juxtaposition, as $\_ \_ : 
{\bf Configuration \times Configuration \longrightarrow}$ ${\bf Configuration}$. The operator $\_\_$ is declared
to satisfy the structural laws of associativity and commutativity and to have
identity {\tt mtconf}. Objects and messages are singleton set configurations and
belong to subsorts ${\bf Object, Msg < Configuration}$, so that more complex configurations
are generated out of them by set union. An {\em object}, representing a node
or a (declared) variable, in a given configuration is represented as a term 
$\langle O : C \; | \; a_1 : v_1 , \dots , a_n : v_n \rangle$, where $O$ is the object's
name or identifier (of sort {\bf Oid}), $C$ is its class (of sort {\bf Cid}), 
the $a_i$'s are the names of the object's
{\em attribute identifiers}, and the $v_i$'s are the corresponding {\em values}.
The set of all the attribute-value pairs of an object state (of sort {\bf Attribute})
is formed by repeated application of the binary union operator $\_,\_$ which also obeys structural laws
of associativity, commutativity, and identity, i.e., the order of the attribute-value
pairs of an object is immaterial. The internal state of a plan is defined in the 
functional module {\tt INTERNAL-STATE-SYNTAX} by extending the sort {\bf Configuration};
the sorts {\bf Exp} and {\bf Qualified}, which we assume to be defined, are used 
to specify expressions and qualified names, respectively.

\begin{alltt}{\scriptsize
fmod INTERNAL-STATE-SYNTAX is
  extending CONFIGURATION . protecting EXP .
  protecting QUALIFIED .
  subsort Qualified < Oid .                      --- Qualified elements are object identifiers
  ops List Command Assignment Empty : -> Cid .   --- Types of nodes
  sort Status .
  ops Inactive Waiting Executing Finishing Failing Finished IterationEnded Variable : -> ExecState .
  sort Outcome .
  ops None Success Failure : -> Outcome .
  op  status: : Status -> Attribute .      --- Status of execution
  op  outcome: : Outcome -> Attribute .    --- Outcome of execution
  ops start: skip: repeat: end: : Exp -> Attribute .  --- Gate conditions
  ops pre: post: inv: : Exp -> Attribute . --- Check conditions
  op  command: : Exp -> Attribute .        --- Command of a command node
  op  assignment: : Exp -> Attribute .     --- Assignment of an assignment node
  ops initval actval: Exp -> Attribute .   --- Initial and actual values of a variable node
  ...
endfm
}\end{alltt}

Using the infrastructure in {\tt INTERNAL-STATE-SYNTAX}, the internal
state of SafeDrive in Figure~\ref{fig:safedrive}, is represented 
by the configuration in Figure~\ref{fig:safedriveinmaude}. Observe
that the sort {\bf Qualified} provides qualified names by means of
the operator ${\bf \_.\_ : Qualified \times Qualified \longrightarrow Qualified}$,
which we use to maintain the hierarchical structure of the plans. The
dots at the end of each object represent the object's attributes
that are not explicitly defined by the plan but that are always
present in each node such as the status or the outcome. There is a
``compilation procedure'' from PLEXIL plans to their corresponding
representation in Maude, that we do not discuss in this paper, which
includes all implicit elements of a node as attributes of the object
representation of the node.

\begin{figure}[htp]
  \centering
\begin{tabular}{|c|}
\hline \\
\begin{minipage}{14cm}
\begin{alltt}{\scriptsize
< SafeDrive : List | end: LookupOnChange(WheelStuck) == true OR pictures == 10, ... >
< Loop . SafeDrive : List | repeat: LookupOnChange(WheelStuck) == false , ... >
< OneMetter . Loop . SafeDrive : Command | command: Drive(1), ... >
< TakePic . Loop . SafeDrive : Command | start: OneMeter.Status == Finished and pictures < 10, 
  command: TakePicture(), ... >
< Counter . Loop . SafeDrive : Assignment | pre: pictures < 10, 
  assignment: pictures := pictures + 1, ... >
< pictures . SafeDrive : Memory | initval: 0, actval: 0 >

}\end{alltt}
\end{minipage} \\
\hline
\end{tabular}
	\caption{SafeDrive in $\thname{R}_\plexil$}
	\label{fig:safedriveinmaude}
\end{figure}

We are now ready to define the sort {\bf State} representing 
the execution state of the plans in the functional module 
{\tt STATE-SYNTAX}, by importing the syntax of external and
internal states:

\begin{alltt}{\scriptsize
fmod STATE-SYNTAX is
  pr EXTERNAL-STATE-SYNTAX .
  pr INTERNAL-STATE-SYNTAX .
  sort State .
  op _|-_ : ExternalState Configuration -> State .
endfm
}\end{alltt}

We adopt the syntax $\Gamma\vdash \pi$ to represent
the execution state of the plans, where $\Gamma$ and $\pi$ are the 
external and internal states, respectively.

\subsection{PLEXIL Semantics}
\label{sec.plexil.semantics}

PLEXIL execution is driven by external events. The set of events includes
events related to lookup in conditions, e.g., changes in the value of an
external state that affects a gate condition, acknowledgments that a command
has been initialized, reception of a value returned by a command, etc.
We focus on the execution semantics of PLEXIL specified in terms of node 
states and transitions between node states that are triggered by condition 
changes (atomic relation) and its synchronous closure under the maximal redexes 
strategy (micro relation).
PLEXIL's atomic relation consists of 42 rules, indexed  
by the type and the execution status of nodes into a dozen groups. Each group  
associates a priority to its set of rules which defines a linear order on the 
set of rules. 

The {\em atomic relation} is defined by $(\Gamma,\pi)\vdash P \longrightarrow_{a} P'$, 
where $P \subseteq \pi$. For
instance, the four atomic rules corresponding to the transitions from {\tt Executing}
for nodes of type {\tt Assignment} are depicted in Figure~\ref{fig:atomic}.
Rule $r_3$ updates the status and the outcome of node {\tt A}
to the values {\tt IterationEnded} and 
{\tt Success}, respectively, and the variable $x$ to the value {\tt v}, i.e.,
the value of the expression {\tt e} in the state $\pi$, whenever the 
expressions associated with the gate condition {\tt End}
and the check condition {\tt Post} of node {\tt A} both evaluate to {\tt true}
in $\pi$. 
In rule $r_1$, {\tt AncInv(A)} is a predicate, parametric
in the name of nodes, stating that none of the ancestors of {\tt A} has changed
the value associated with its invariant condition to {\tt false}. The value $\bot$
represents the special value ``Unknown''.
We use $(\Gamma,\pi)\vdash e \leadsto v$ to denote
that expression $e$ evaluates to value $v$ in state $\Gamma \vdash \pi$;
by abuse of notation, we write $(\Gamma,\pi)\vdash e \not \leadsto v$ to
denote that expression $e$ does not evaluate to value $v$ in $(\Gamma,\pi)$.

\begin{figure}[htp]
  \centering
  \fbox{
{\small
\begin{tabular}{c}
 $ \inferrule*[right = $r_1$]
   {(\Gamma,\pi) \vdash {\tt AncInv(A) \leadsto false} \\ {\tt A.body\; =\; x\,:=\,e}\\\\ 
    {\tt A.type = Assignment} \\ {\tt A.status = Executing}
   }
   {(\Gamma,\pi)\vdash {\tt Node\;A} \longrightarrow_a \;{\tt Node \; A \; with\;[status = Finished\,,\,outcome=Failure\,,\,x=\bot]} } 
 $ \\ \\

 $ \inferrule*[right = $r_2$]
   {(\Gamma,\pi) \vdash {\tt A.Invariant \leadsto false} \\ {\tt A.body\; =\; x\,:=\,e}\\\\ 
    {\tt A.type = Assignment} \\ {\tt A.status = Executing}
   }
   {(\Gamma,\pi)\vdash {\tt Node\;A} \longrightarrow_a \;{\tt Node \; A \; with\;[status = IterationEnded\,,\,outcome=Failure\,,\,x=\bot]} } 
 $ \\ \\

 $ \inferrule*[right = $r_3$]
   {(\Gamma,\pi) \vdash {\tt A.End \leadsto true} \\  {\tt A.body\; =\; x\,:=\,e}\\\\ 
    (\Gamma,\pi) \vdash {\tt A.Post \leadsto true} \\ (\Gamma,\pi) \vdash {\tt e \leadsto v} \\\\
    {\tt A.type = Assignment} \\ {\tt A.status = Executing}
   }
   {(\Gamma,\pi)\vdash {\tt Node\;A} \longrightarrow_a \;{\tt Node \; A \; with\;[status = IterationEnded\,,\,outcome=Success\,,\,x=v]} }
 $ \\ \\

 $ \inferrule*[right = $r_4$]
   {(\Gamma,\pi) \vdash {\tt A.End \leadsto true} \\ (\Gamma,\pi) \vdash {\tt A.Post \not\leadsto true} \\\\
    {\tt A.type = Assignment} \\ {\tt A.status = Executing}
   }
   {(\Gamma,\pi)\vdash {\tt Node\;A} \longrightarrow_a \;{\tt Node \; A \; with\;[status = IterationEnded\,,\,outcome=Failure]} } 
 $ \\ \\
 $\{r_4 < r_3 < r_2 < r_1\}$ \\
\end{tabular}
}
	}
	\caption{Atomic rules corresponding to the transitions from {\tt Executing}
for nodes of type {\tt Assignment}}
	\label{fig:atomic}
\end{figure}

The relation $r < s$ between the labels of two different rules specifies
that the rule $r$ is only applied when the second rule $s$ cannot be
applied. That is, the binary relation on rules defines the {\em order 
of their application} when deriving atomic transitions. So, a rule $r$
can be used to derive an atomic transition if all its premises are valid
and no rule higher than $r$ (in its group) is applicable. 
In the case
of PLEXIL's atomic relation, the binary relation $<$ on rules is a linear 
ordering. This linearity is key to the determinism of PLEXIL
(see~\cite{DMP08}). 

The {\em micro relation} $\Gamma \vdash \pi \longrightarrow_m \pi'$, 
the synchronous closure of the atomic relation under the maximal redexes strategy, 
is defined as:
\[
\inferrule*[right = Micro]
 {(\Gamma,\pi) \vdash P_1 \longrightarrow_a P_1' \\\\
  \dots \\\\
  (\Gamma,\pi) \vdash P_n \longrightarrow_a P_n'
 }
 {\Gamma \vdash \pi \longrightarrow_m (\pi \setminus \bigcup_{1\leq i \leq n} P_i) \; \cup \; \bigcup_{1\leq i \leq n} P_i' }
\]
where $M_\pi = \{ P_1,\dots,P_n\}$ is the set of nodes and variables 
in $\pi$ that are affected by the micro relation. If two different
processes in $\pi$, say {\tt A} and {\tt B}, write to the same variable,
only the update of the process with higher priority is considered 
(assignment nodes have an associated priority always), e.g., {\tt A}
if {\tt A.priority > B.priority}, {\tt B} if {\tt B.priority >
A.priority}, and none otherwise.

In order to specify the PLEXIL semantics in Maude, we first define 
the infrastructure for the serialization procedure in 
the functional module {\tt SERIALIZATION-INFRASTRUCTURE}.

\begin{alltt}{\scriptsize
fmod SERIALIZATION-INFRASTRUCTURE is
  inc STATE-SYNTAX .
  ...
  op  [_:_|_] : Oid Cid AttributeSet -> Object .     --- New syntactic sugar for objects
  op  updateStatus    : Qualified Status  -> Msg .   --- Update status message
  op  updateOutcome   : Qualified Outcome -> Msg .   --- Update outcome message
  op  updateVariable  : Qualified Value   -> Msg .   --- Update variable message
  ...
  ops  applyUpdates unprime : State -> State .       --- Application of updates and `unpriming'
  var \(\Gamma\) : ExternalState .   var \(\pi\) : InternalState .  var A : Oid .  var C : Cid .
  var Att : AttributeSet .  vars S S' : Status .     var St : State .
  eq  applyUpdates( \(\Gamma \vdash\) [ A : C | status: S , Att ] updateStatus(A , S') \(\pi\) )
   =  applyUpdates( \(\Gamma \vdash\) [ A : C | status: S' , Att ] \(\pi\) ) . 
  ...
  eq  applyUpdates(St) = St [owise] .
  eq  unprime( \(\Gamma \vdash\) [ A : C | Att ] \(\pi\) ) =  unprime( \(\Gamma \vdash\) < A : C | Att > \(\pi\) ) .
  eq  unprime(St) = St [owise] .
endfm
}\end{alltt}

Following the idea
of the serialization procedure, we distinguish between unprimed and
primed redexes by using syntactic sugar for denoting objects in the Maude
specification: unprimed redexes are identified with the already defined
syntax for objects in the form of $\langle O : C \; | \; ...\rangle$ and primed
redexes are identified with the new syntax for objects in the form 
of $[ O : C \; | \; ...]$. We use messages, i.e., elements in the sort
{\bf Msg}, to denote the update actions associated with the reduction
rules for the atomic relation; we accumulate these messages in the
internal state of the execution state of the plans, i.e., we also use the
internal state in the spirit of the log book of the serialization
procedure. For example, the configuration 
{\tt updateStatus(A,IterationEnded) updateOutcome(A,Success) updateVariable(x,v)} 
corresponds to the update actions in the conclusion of rule $r_3$ 
in Figure~\ref{fig:atomic}. The functions {\tt applyUpdates} and
{\tt unprimes} apply all the collected updates in the internal state, and ``unprimes'' the
``primed'' nodes, respectively. In the specification above, it is shown how the status 
of a node is updated and how primed nodes become unprimed.

We give the equational serialization procedure in the general
setting in which we consider a linear ordering on the rules.

\begin{definition}[Equational serialization procedure (with priorities)] Let

\[\{r_i : (\Gamma,\pi) \vdash {\tt Node\; A} \; \longrightarrow_a \; {\tt Node \; A \; with \;[updates_i]\;}if\;{\tt C_i}\}_{1 \leq i \leq n}\] be the collection of atomic rules (in horizontal notation) defining the transition
relation for nodes of type $T$ in status $S$,
with $r_n < \cdots < r_i < \cdots < r_1$, where 
${\tt updates_i}$ is the set of update actions (the order
in the update actions is irrelevant) in the conclusion of $r_i$ and ${\tt C_i}$ is the set of
premises of $r_i$. The equational serialization procedure
is given by the following set of equations, in Maude notation, defining the 
function symbol, say, $r$:


\begin{alltt}{\scriptsize
  var \(\Gamma\) : ExternalState .   var A : Oid .   var S : Status .
  var \(\pi\) : Configuration .   var T : Cid .   var Attr : AttributeSet .
  op  r : State -> State .
  eq  r( \(\Gamma\) \(\vdash\) < A : T | status: S , Attr > \(\pi\) ) 
   =  if C\(\sb{1}\) == true then r( \(\Gamma\) \(\vdash\) [ A : T | status: S , Attr ] messages(updates\(\sb{1}\)) \(\pi\) )
      else if C\(\sb{2}\) == true 
           ... else if C\(\sb{n}\) == true then r( \(\Gamma\) \(\vdash\) [ A : T | status: S , Attr ] messages(updates\(\sb{n}\)) \(\pi\) )
                    else r( \(\Gamma\) \(\vdash\) [ A : T | status: S , Attr ] \(\pi\) ) fi
           fi ...
      fi .
  eq  r( \(\Gamma\) \(\vdash\) \(\pi\) ) = \(\Gamma\) \(\vdash\) \(\pi\) [owise] .
}\end{alltt}
 where ${\tt messages(updadates_i)}$ represents the message configuration
 associated with the update actions in the conclusion of rule $r_i$.
\end{definition}

The equational serialization procedure defines a fresh function symbol, say, 
$r : {\bf State \longrightarrow State}$. The first equation for $r$ tries to
apply the atomic rules in the given order, by first evaluating the condition
and then marking the affected node. If the condition evaluates to true, then
update messages are generated. The second equation, removes the function 
symbol $r$ when there aren't any more possible atomic reductions with the
rules $\{r_i\}$.


The atomic relation is defined
in the functional module {\tt ATOMIC-RELATION} by instantiating
the equational serialization procedure for each one of the 
twelve groups of atomic rules with a different function symbol
for each one.

Finally, the micro relation is defined by the rule {\tt micro} 
in the system module {\tt PLEXIL-RLS}, which materializes 
the rewrite theory $\thname{R}_\plexil$ in Maude:


\begin{alltt}{\scriptsize
mod PLEXIL-RLS is 
  pr ATOMIC-RELATION .
  pr SERIALIZATION-INFRASTRUCTURE .
  rl [micro] : \(\Gamma \vdash \pi\)  =>  \(\Gamma \vdash\) unprime(applyUpdates(a\(\sb{1}\)(...a\(\sb{12}\)( \(\Gamma \vdash \pi\) )...))) .
endm
}\end{alltt}

where ${\tt a_1,\dots,a_{12}}$ are the function symbols in {\tt ATOMIC-RELATION}
defining the serialization procedure for each one of the twelve groups of rules.




\section{Preliminary Results}
\label{sec.results}

We have used $\thname{R}_\plexil$ to 
validate the
semantics of PLEXIL against a wide variety of plan examples.
We report on the following two issues of the original PLEXIL semantics
that were discovered with the help of $\thname{R}_\plexil$:

\begin{enumerate}
	\item {\em Non-atomicity of the atomic relation.} A prior 
	version of the atomic rules $r_3$ and $r_4$ for 
	{\tt Assignment} nodes in state {\tt Executing},
	presented in Figure~\ref{fig:atomic}, introduced
	an undesired interleaving semantics for variable 
	assignments, which invalidated the synchronous
	nature of the language.
	\item {\em Spurious non-termination of plans.} Due to
	lack of detail in the original specification of some
	predicates, there were cases in which some transitions
	for {\tt List} nodes in state {\tt IterationEnded} would
	lead to spurious infinite loops.	
\end{enumerate}

Although the formal operational semantics of PLEXIL in~\cite{DMP08}
has been used to prove several properties of PLEXIL, 
neither one of the issues was previously found. As as matter of fact,
these issues do not compromise any of the proven properties
of the language.
Solutions to both 
issues were provided by the authors, and have been adopted
in the latest version of the formal PLEXIL semantics. We are currently 
using $\thname{R}_\plexil$ as the formal
interpreter of {\em PLEXIL's Formal Interactive
Visual Environment}~\cite{RMC09} (PLEXIL5), a prototype 
graphical environment that 
enables step-by-step execution of plans for scripted sequence
of external events, for further validation of  the language's intended semantics.

We have also developed a variant of  $\thname{R}_\plexil$ 
in which
the serialization procedure was implemented with rewrite rules,
instead of equations, and rewrite strategies. 
In general, $\thname{R}_\plexil$ outperforms that variant
by two orders of magnitude on average, and by three
orders of magnitude in some extreme cases.

The rewrite theory $\thname{R}_\plexil$ has approximately
1000 lines of code, of which 308 lines correspond to the module
{\tt ATOMIC-RELATION}. The rest corresponds to 
the syntax and infrastructure specifications.

\section{Related Work and Conclusion}
\label{sec.concl}

Rewriting logic has been used previously as a testbed for specifying
and animating the semantics of synchronous languages.
M. AlTurki and J. Meseguer~\cite{musab08} have studied the
rewriting logic semantics of the language Orc, which includes
a synchronous reduction relation. T. Serbanuta {\em et al.}~\cite{trian08}
define the execution of $P$-systems with structured
data with continuations. The focus of the former is to
use rewriting logic to study the (mainly) non-deterministic 
behavior of Orc programs, while the focus of the latter
is to study the relationship between $P$-systems
and the existing continuation framework for enriching
each with the strong features of the other. Our approach
is based more on exploiting the determinism of a synchronous
relation to tackle the problem associated with the interleaving
semantics of concurrency in rewriting logic.
P. Lucanu~\cite{lucanu09} studies the problem of the interleaving
semantics of concurrency in rewriting logic for 
synchronous systems from the perspective of $P$-systems.
The determinism property of the synchronous language
Esterel~\cite{Esterel00} was formally proven by O. Tardieu in~\cite{tardieu07}.

We have presented a rewriting logic semantics of
PLEXIL, a synchronous plan execution
language developed by NASA to support autonomous
spacecraft operations. The rewriting logic specification,
a formal interpreter and a semantic benchmark for validating
the semantics of the language,
relies on the determinism of PLEXIL's
atomic relation
and a serialization procedure that enables the specification of
a synchronous relation in an asynchronous computational model. 
Two issues in the original design of PLEXIL were found with the
help of the rewriting logic specification of the language: 
(i) there was an atomic rule with the potential 
to violate the atomicity of the atomic relation, thus voiding 
the synchronous nature of the language, and (ii) a set of rules introducing spurious
non-terminating executions of plans. We proposed solutions to these
issues that were integrated into the current semantics of
the language.

Although we have focused on PLEXIL, the formal framework that we have
developed is presented in a general setting of abstract set relations.
In particular, we think that this framework can be applied to other
deterministic synchronous languages. 
To the best of our knowledge there was no mechanized library
of abstract set relations suitable for the definition
and verification of synchronous relations; neither was there a
soundness and completeness proof of a serialization 
procedure for the simulation of synchronous relations
by rewrite systems.

To summarize, we view this work as (i) a step forward in bringing the use of
formal methods closer to practice, (ii) a contribution
to the modular and mechanized study of semantic relations, and 
(iii) yet another, but interesting contribution to the rewriting logic semantics
project.

We intend to continue our collaborative work with PLEXIL 
development team with the goal of arriving
at a formal environment for the validation of PLEXIL.
Such an environment would provide a rich formal tool
to PLEXIL enthusiasts for the experimentation, analysis
and verification of PLEXIL programs, which could then
be extended towards a rewriting-based PLEXIL implementation
with associated analysis tools. Part of our future work
is also to investigate the modularity of the equational serialization
procedure with prioritized rules.


\paragraph{\bf Acknowledgments.} This work was supported by the National
  Aeronautics and Space Administration at Langley Research Center under 
  the Research Cooperative Agreement No. NCC-1-02043 awarded to the National
  Institute of Aerospace, while the second author was resident at this institute.
  The third author was partially supported by NSF Grant IIS 07-20482.
  The authors would like to thank the members of the NASA's Automation for
  Operation (A4O) project and, especially, the PLEXIL development team led by Michael 
  Dalal at NASA Ames, for their technical support.

  \bibliographystyle{eptcs}
  \bibliography{biblio.bib}

\begin{thebibliography}{10}
\providecommand{\bibitemstart}[1]{\bibitem{#1}}
\providecommand{\bibitemend}{}
\providecommand{\bibliographystart}{}
\providecommand{\bibliographyend}{}
\providecommand{\url}[1]{\texttt{#1}}
\providecommand{\urlprefix}{Available at }
\providecommand{\bibinfo}[2]{#2}
\bibliographystart

\bibitemstart{musab08}
\bibinfo{author}{M.~AlTurki} \& \bibinfo{author}{J.~Meseguer}
  (\bibinfo{year}{2008}): \emph{\bibinfo{title}{Reduction Semantics and Formal
  Analysis of {Orc} Programs}}.
\newblock {\sl \bibinfo{journal}{Electr. Notes Theor. Comput. Sci.}}
  \bibinfo{volume}{200}(\bibinfo{number}{3}), pp. \bibinfo{pages}{25--41}.
\bibitemend

\bibitemstart{Esterel00}
\bibinfo{author}{G.~Berry} (\bibinfo{year}{2000}): \emph{\bibinfo{title}{The
  Foundations of {E}sterel}}.
\newblock In: {\sl \bibinfo{booktitle}{Proof, Language and Interaction: Essays
  in Honour of Robin Milner}}. \bibinfo{publisher}{MIT Press},
  \bibinfo{address}{Cambridge, MA, USA}, pp. \bibinfo{pages}{425--454}.
\bibitemend

\bibitemstart{bruni06}
\bibinfo{author}{R.~Bruni} \& \bibinfo{author}{J.~Meseguer}
  (\bibinfo{year}{2006}): \emph{\bibinfo{title}{Semantic foundations for
  generalized rewrite theories}}.
\newblock {\sl \bibinfo{journal}{Theor. Comput. Sci.}}
  \bibinfo{volume}{360}(\bibinfo{number}{1-3}), pp. \bibinfo{pages}{386--414}.
\newblock \urlprefix\url{http://dx.doi.org/10.1016/j.tcs.2006.04.012}.
\bibitemend

\bibitemstart{LustrePOPL87}
\bibinfo{author}{P.~Caspi}, \bibinfo{author}{D.~Pilaud},
  \bibinfo{author}{N.~Halbwachs} \& \bibinfo{author}{J.~A. Plaice}
  (\bibinfo{year}{1987}): \emph{\bibinfo{title}{LUSTRE: a declarative language
  for real-time programming}}.
\newblock In: {\sl \bibinfo{booktitle}{POPL '87: Proceedings of the 14th ACM
  SIGACT-SIGPLAN symposium on Principles of programming languages}}.
  \bibinfo{publisher}{ACM}, \bibinfo{address}{New York, NY, USA}, pp.
  \bibinfo{pages}{178--188}.
\bibitemend

\bibitemstart{maude-book}
\bibinfo{author}{M.~Clavel}, \bibinfo{author}{F.~Dur\'an},
  \bibinfo{author}{S.~Eker}, \bibinfo{author}{J.~Meseguer},
  \bibinfo{author}{P.~Lincoln}, \bibinfo{author}{N.~Mart\'{\i}-Oliet} \&
  \bibinfo{author}{C.~Talcott} (\bibinfo{year}{2007}):
  \emph{\bibinfo{title}{All About Maude - A High-Performance Logical
  Framework}}.
\newblock \bibinfo{publisher}{Springer LNCS Vol. 4350}, \bibinfo{edition}{1st}
  edition.
\bibitemend

\bibitemstart{dershowitz90}
\bibinfo{author}{N.~Dershowitz} \& \bibinfo{author}{J.~P. Jouannaud}
  (\bibinfo{year}{1990}): \emph{\bibinfo{title}{Rewrite Systems}}.
\newblock In: {\sl \bibinfo{booktitle}{Handbook of Theoretical Computer
  Science, Volume B: Formal Models and Sematics (B)}}. \bibinfo{publisher}{The
  MIT Press}, pp. \bibinfo{pages}{243--320}.
\bibitemend

\bibitemstart{DMP07ICAPSW}
\bibinfo{author}{G.~Dowek}, \bibinfo{author}{C.~Mu{\~{n}}oz} \&
  \bibinfo{author}{C.~P\u{a}s\u{a}reanu} (\bibinfo{year}{2007}):
  \emph{\bibinfo{title}{A Formal Analysis Framework for {PLEXIL}}}.
\newblock In: {\sl \bibinfo{booktitle}{Proceedings of 3rd Workshop on Planning
  and Plan Execution for Real-World Systems}}. pp. \bibinfo{pages}{45--51}.
\bibitemend

\bibitemstart{DMP08}
\bibinfo{author}{G.~Dowek}, \bibinfo{author}{C.~Mu{\~{n}}oz} \&
  \bibinfo{author}{C.~P\u{a}s\u{a}reanu} (\bibinfo{year}{2008}):
  \emph{\bibinfo{title}{A Small-Step Semantics OF {PLEXIL}}}.
\newblock \bibinfo{type}{Technical Report} \bibinfo{number}{2008-11},
  \bibinfo{institution}{National Institute of Aerospace},
  \bibinfo{address}{Hampton, VA}.
\bibitemend

\bibitemstart{plexil}
\bibinfo{author}{T.~Estlin}, \bibinfo{author}{A.~J\'{o}nsson},
  \bibinfo{author}{C.~P\u{a}s\u{a}reanu}, \bibinfo{author}{R.~Simmons},
  \bibinfo{author}{K.~Tso} \& \bibinfo{author}{V.~Verna}
  (\bibinfo{year}{2006}): \emph{\bibinfo{title}{{P}lan {E}xecution
  {I}nterchange {L}anguage ({PLEXIL})}}.
\newblock \bibinfo{type}{Technical Memorandum}
  \bibinfo{number}{TM-2006-213483}, \bibinfo{institution}{NASA}.
\bibitemend

\bibitemstart{lucanu09}
\bibinfo{author}{D.~Lucanu} (\bibinfo{year}{2009}):
  \emph{\bibinfo{title}{Strategy-Based Rewrite Semantics for Membrane Systems
  Preserves Maximal Concurrency of Evolution Rule Actions}}.
\newblock {\sl \bibinfo{journal}{Electr. Notes Theor. Comput. Sci.}}
  \bibinfo{volume}{237}, pp. \bibinfo{pages}{107--125}.
\bibitemend

\bibitemstart{unified-tcs}
\bibinfo{author}{J.~Meseguer} (\bibinfo{year}{1992}):
  \emph{\bibinfo{title}{Conditional Rewriting Logic as a Unified Model of
  Concurrency}}.
\newblock {\sl \bibinfo{journal}{Theoretical Computer Science}}
  \bibinfo{volume}{96}(\bibinfo{number}{1}), pp. \bibinfo{pages}{73--155}.
\bibitemend

\bibitemstart{meseguerrosu07}
\bibinfo{author}{J.~Meseguer} \& \bibinfo{author}{G.~Rosu}
  (\bibinfo{year}{2007}): \emph{\bibinfo{title}{The rewriting logic semantics
  project}}.
\newblock {\sl \bibinfo{journal}{Theor. Comput. Sci.}}
  \bibinfo{volume}{373}(\bibinfo{number}{3}), pp. \bibinfo{pages}{213--237}.
\newblock \urlprefix\url{http://dx.doi.org/10.1016/j.tcs.2006.12.018}.
\bibitemend

\bibitemstart{ORS92}
\bibinfo{author}{S.~Owre}, \bibinfo{author}{J.~Rushby} \&
  \bibinfo{author}{N.~Shankar} (\bibinfo{year}{1992}):
  \emph{\bibinfo{title}{{PVS}: A Prototype Verification System}}.
\newblock In: \bibinfo{editor}{Deepak Kapur}, editor: {\sl
  \bibinfo{booktitle}{11th International Conference on Automated Deduction
  (CADE)}}, {\sl \bibinfo{series}{Lecture Notes in Artificial Intelligence}}
  \bibinfo{volume}{607}. \bibinfo{publisher}{Springer-Verlag},
  \bibinfo{address}{Saratoga, NY}, pp. \bibinfo{pages}{748--752}.
\bibitemend

\bibitemstart{Plotkin:NatSemTR}
\bibinfo{author}{G.~D. Plotkin} (\bibinfo{year}{2004}): \emph{\bibinfo{title}{A
  structural approach to operational semantics}}.
\newblock {\sl \bibinfo{journal}{J. Log. Alg. Prog.}}
  \bibinfo{volume}{60-61}, pp. \bibinfo{pages}{17--139}.
\bibitemend

\bibitemstart{RMC09}
\bibinfo{author}{C.~Rocha}, \bibinfo{author}{C.~Mu{\~{n}}oz} \&
  \bibinfo{author}{H.~Cadavid} (\bibinfo{year}{2009}): \emph{\bibinfo{title}{A
  Graphical Environment for the Semantic Validation of a Plan Execution
  Language}}.
\newblock {\sl \bibinfo{journal}{IEEE International Conference on Space Mission
  Challenges for Information Technology}} \bibinfo{volume}{0}, pp.
  \bibinfo{pages}{201--207}.
\bibitemend

\bibitemstart{serbanuta09}
\bibinfo{author}{T.~Serbanuta}, \bibinfo{author}{G.~Rosu} \&
  \bibinfo{author}{J.~Meseguer} (\bibinfo{year}{2009}): \emph{\bibinfo{title}{A
  rewriting logic approach to operational semantics}}.
\newblock {\sl \bibinfo{journal}{Inf. Comput.}}
  \bibinfo{volume}{207}(\bibinfo{number}{2}), pp. \bibinfo{pages}{305--340}.
\bibitemend

\bibitemstart{trian08}
\bibinfo{author}{T.~Serbanuta}, \bibinfo{author}{G.~Stefanescu} \&
  \bibinfo{author}{G.~Rosu} (\bibinfo{year}{2008}):
  \emph{\bibinfo{title}{Defining and Executing P Systems with Structured Data
  in K}}.
\newblock In: \bibinfo{editor}{David~W. Corne}, \bibinfo{editor}{Pierluigi
  Frisco}, \bibinfo{editor}{Gheorghe Paun}, \bibinfo{editor}{Grzegorz
  Rozenberg} \& \bibinfo{editor}{Arto Salomaa}, editors: {\sl
  \bibinfo{booktitle}{Workshop on Membrane Computing}}, {\sl
  \bibinfo{series}{Lecture Notes in Computer Science}} \bibinfo{volume}{5391}.
  \bibinfo{publisher}{Springer}, pp. \bibinfo{pages}{374--393}.
\bibitemend

\bibitemstart{tardieu07}
\bibinfo{author}{O.~Tardieu} (\bibinfo{year}{2007}): \emph{\bibinfo{title}{A
  deterministic logical semantics for pure Esterel}}.
\newblock {\sl \bibinfo{journal}{ACM Trans. Program. Lang. Syst.}}
  \bibinfo{volume}{29}(\bibinfo{number}{2}), p.~\bibinfo{pages}{8}.
\bibitemend

\bibitemstart{verdejo06}
\bibinfo{author}{A.~Verdejo} \& \bibinfo{author}{N.~Mart\'{\i}-Oliet}
  (\bibinfo{year}{2006}): \emph{\bibinfo{title}{Executable structural
  operational semantics in Maude}}.
\newblock {\sl \bibinfo{journal}{J. Log. Algebr. Program.}}
  \bibinfo{volume}{67}(\bibinfo{number}{1-2}), pp. \bibinfo{pages}{226--293}.
\bibitemend

\bibitemstart{ue}
\bibinfo{author}{V.~Verna}, \bibinfo{author}{A.~J\'{o}nsson},
  \bibinfo{author}{C.~P\u{a}s\u{a}reanu} \& \bibinfo{author}{M.~Latauro}
  (\bibinfo{year}{2006}): \emph{\bibinfo{title}{Universal Executive and PLEXIL:
  Engine and Language for Robust Spacecraft Control and Operations}}.
\newblock In: {\sl \bibinfo{booktitle}{Proceedings of the American Institute of
  Aeronautics and Astronautics Space Conference}}.
\bibitemend

\bibitemstart{viry-tcs}
\bibinfo{author}{P.~Viry} (\bibinfo{year}{2002}):
  \emph{\bibinfo{title}{Equational rules for rewriting logic}}.
\newblock {\sl \bibinfo{journal}{Theoretical Computer Science}}
  \bibinfo{volume}{285}, pp. \bibinfo{pages}{487--517}.
\bibitemend

\bibliographyend
\end{thebibliography}
\end{document}